\documentclass[a4paper]{amsart}
\usepackage[utf8]{inputenc}
\usepackage[T1]{fontenc}
\usepackage[english]{babel}
\usepackage{amsmath,amsthm,amssymb,graphicx,subfigure,scrtime,xypic,tikz}
\usetikzlibrary{arrows}
\usepackage[backgroundcolor=white]{todonotes}
\usepackage{mathabx}


\newtheorem{thm}{Theorem}[section]
\newtheorem{cor}[thm]{Corollary}

\newtheorem{lemma}[thm]{Lemma}
\newtheorem{prop}[thm]{Proposition}
\theoremstyle{definition}

\theoremstyle{remark}
\newtheorem{rem}[thm]{Remark}
\newtheorem{problem}{\bf Problem}

\newtheorem{claim}[thm]{Claim}

\numberwithin{equation}{section}

\newcommand{\struc}[1]{\langle #1 \rangle}

\newcommand{\alg}[1]{\mathbf{#1}}

\newcommand{\me}{\mathrm{\bf m}}

\def\tkzscl{0.8} 

\title{Arrow type impossibility theorems over median algebras }
\date{\today, \thistime}
\author{Miguel Couceiro}
\address{LORIA (CNRS - Inria Nancy Grand Est - Université de Lorraine), \'Equipe Orpailleur,
Batiment B, Campus Scientifique,   B.P. 239,  
 F-54506 Vandoeuvre-lès-Nancy, France}\email{miguel.couceiro[at]inria.fr}
 \author{Stephan Foldes}
\address{Tampere University of Technology,
PL 553, 33101 Tampere, Finland} \email{sf[at]tut.fi}
\author{Gerasimos C. Meletiou}
\address{TEI of Epirus, PO Box 110, 47100 Arta, Greece} \email{gmelet[at]teiep.gr }
\begin{document}

\begin{abstract}
We characterize trees as median algebras and semilattices by relaxing conservativeness. Moreover, we describe  median homomorphisms between products of median algebras and
show that Arrow type impossibility theorems for mappings from a product $\alg{A}_1\times  \cdots \times \alg{A}_n$ of median algebras to a median algebra
$\alg{B}$ are possible if and only if $\alg{B}$ is a tree, when thought of as an ordered structure.
\end{abstract}

\maketitle

\section{Introduction}\label{sec:intro}

Median algebras have been investigated by several authors in the context of distributive algebras, semilattices, graphs and hypergraphs 
and in the framework of convex and metric spaces (see, e.g.,  \cite{Avann1961,Bandelt1983,Barthelemy83,Birkhoff1947,Evans82,Monjardet76,Monjardet80,Mulder79,Sholander1954,Slater80,Verheul93}).
A \emph{median algebra} is a structure $\alg{A}=\struc{A, \me}$ for a set $A$ and a ternary symmetric operation $\me\colon A^3\to A$, called \emph{median}, such that
\begin{gather*}
\me(x,x,y)=x,\\
\me(\me(x,y,z), t, u)=\me(x,\me(y,t,u), \me(z,t,u)).
\end{gather*}
In fact, axiom systems using only 4 variables are also known \cite{Bandelt1983,Isbell1980,Kol-Mar74}. For instance, in \cite{Kol-Mar74} it was shown that 
$$\me(x,x,y)=x\quad \text{and} \quad \me(\me(x,w,z),y,z)=\me(\me(y,z,w),x,z)$$
suffice to axiomatize median algebras. In particular, it immediately follows that 
$$\me(\me(x,y,z), y, z)=\me(x,y,z).$$

It is well known \cite{Sholander1954} that each element $a$ of a median algebra $\alg{A}$ gives rise to a \emph{median semilattice} 
$\struc{A, \leq_a}$ where $\leq_a$ is given  by
\[
x \leq_a y \quad \iff \quad \me(a,x,y)=x.
\]
In this semilattice $a$ is the bottom element and the associated binary operation, denoted $\wedge_a$, is  defined by $x\wedge_a y =\me(a,x,y)$.
As shown in \cite{Sholander1954}, median semilattices coincide exactly with those $\wedge$-semilattices whose principal ideals 
$$\downarrow x:=\{y\in A\colon y\leq_a x\}$$
are distributive lattices, and such that for any $a,b,c\in A$, $a\wedge b, b\wedge c, c\wedge a$ have a supremum whenever each pair of these meets is bounded above.
In such a case, we can define a median operation by 
\begin{equation}\label{eqn:intro}
\me_\leq(x,y,z)=(x\wedge y)\vee (x\wedge z)\vee (z\wedge y),
\end{equation}
for every $x,y,z \in A$. Moreover, $\me=\me_{\leq_a}$ for every median algebra $\alg{A}=\struc{A, \me}$, and every $a\in A$ \cite{Avann1961}.
Similarly, every distributive lattice gives rise to a median algebra using \eqref{eqn:intro} and the converse also holds whenever there are $a,b\in A$ such that
$\me(a,x,b)=x$ for every $x \in A$.

Another noteworthy connection is  to \emph{median graphs}, i.e., connected graphs having the property that for any three vertices $a,b,c$, 
there is exactly one vertex $x$ in the intersection of the shortest paths between any two vertices in $\{a,b,c\}$. As shown in  \cite{Avann1961},
the \emph{covering graph} (i.e., undirected {Hasse} diagram) of
every median semilattice whose intervals are finite is a median graph. Conversely, 
every median graph is the covering graph of a median semilattice. For further background see, e.g.,  \cite{Bandelt1983}.

In \cite{CMT} the authors studied ``conservative'' median algebras, i.e.,  median algebras that satisfy $\me(x,y,z)\in \{x,y,z\},$ for all $ x, y, z \in A.$
Apart from the 4-element Boolean algebra, it was shown that conservative median algebras 
can be represented by two lower bounded chains whose least elements are identified (thus, they can always be regarded as chains).  
It is noteworthy to observe that they are exactly those median algebras whose subsets are themselves median (sub)algebras (in the terminology of \cite{BandeltVel99},
those with median stabilization degree equal to 0). Equivalently, they were shown to have the 4-element star as a forbidden subgraph. 
The authors in \cite{CMT} also provided complete descriptions of median-homomorphisms between conservative median algebras (with at least 5 elements).

\begin{thm}[{\cite{CMT}}]\label{thm:1}
Let $\alg{A}$ and $\alg{B}$ be two conservative median algebras (thought of as chains) with at least five elements. 
A map $f:\alg{A}\to \alg{B}$  is a median homomorphism if and only if $f$  is  monotone 
(i.e., order-preserving or order-reversing).
\end{thm}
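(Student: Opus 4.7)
The plan is to use the structural description of conservative median algebras recalled just above the statement to reduce the problem to a purely order-theoretic one, and then to derive monotonicity from a short ``betweenness'' analysis.

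First, since $\alg{A}$ and $\alg{B}$ are conservative with at least five elements, that description identifies each of them with a chain on which the median operation equals the middle element: $\me(x,y,z) = y$ whenever $x \leq y \leq z$. Hence a map $f \colon \alg{A} \to \alg{B}$ preserves $\me$ if and only if it satisfies the \emph{betweenness condition}: for every triple $x \leq y \leq z$ in $\alg{A}$, the image $f(y)$ lies between $f(x)$ and $f(z)$ in the chain $\alg{B}$. The direction ``monotone $\Rightarrow$ median homomorphism'' is then immediate, since a monotone map sends ordered triples to ordered triples.

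For the converse, I would suppose that $f$ satisfies betweenness and argue that $f$ must be either order-preserving or order-reversing. Assume, for contradiction, the existence of witnesses $a<b$ with $f(a)<f(b)$ (a strict ascent) and $c<d$ with $f(c)>f(d)$ (a strict descent). Focusing first on the ascent, betweenness applied to triples of the forms $z<a<b$, $a<y<b$ and $a<b<w$ respectively confines $f(z)\leq f(a)$ for every $z<a$, $f(a)\leq f(y)\leq f(b)$ for every $y$ strictly between $a$ and $b$, and $f(w)\geq f(b)$ for every $w>b$. A second pass of betweenness applied inside each of these three regions (triples of the forms $z<z'<a$, $y<y'<b$, and $b<w<w'$) upgrades ``bounded'' to ``non-decreasing'' on each region, and the boundary inequalities above glue the three pieces into a globally non-decreasing $f$ on $\alg{A}$. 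By the symmetric argument applied to the descent $c<d$, the map $f$ would also be globally non-increasing, forcing $f$ to be constant\,---\,contradicting the strict ascent.

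The main obstacle, and the point that needs genuine care, is the bookkeeping in the second pass and at the hinge points $a$ and $b$: when several values of $f$ coincide, some betweenness constraints trivialise or collapse to equalities, and one must check that the local non-decreasing conclusions really do extend across the hinges without gaps. These verifications are elementary but must be carried out to conclude that the three regions genuinely glue into a globally monotone map, at which point the theorem follows.
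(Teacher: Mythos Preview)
The paper does not prove this theorem; it is quoted from \cite{CMT} as background in the introduction, and no argument for it appears in the present text. Your proposal therefore cannot be compared against a proof here, but it stands on its own as a correct elementary argument.

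Your first reduction --- that on chains a median homomorphism is exactly a map satisfying the betweenness condition --- is in fact the specialisation to chains of the paper's Theorem~\ref{Hom-Median}$(iv)$: when $\alg{A}$ and $\alg{B}$ are chains the convex hull $[a,b]$ is the order interval, so $f([a,b])\subseteq[f(a),f(b)]$ says precisely that $f$ preserves betweenness. The remaining step, that a betweenness-preserving self-map of a chain must be monotone, is the real content. Your contradiction via a simultaneous strict ascent and strict descent is sound, and the bookkeeping you flag at the hinges is indeed routine: once the first-pass inequalities $f(z)\le f(a)$, $f(a)\le f(y)\le f(b)$, $f(w)\ge f(b)$ are established, each second-pass betweenness constraint has one endpoint already pinned, so the ``between'' relation resolves unambiguously to the desired inequality, and the three monotone pieces glue at $a$ and $b$ by those same first-pass bounds.
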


This result was then lifted, by making use of dual topological spaces of median algebras,  to median-homomorphisms between products of conservative median algebras.
For each positive integer $n$, we set $[n]=\{1, \ldots, n\}$.

\begin{thm}[{\cite{CMT}}]\label{thm:2}
Let $\alg{A}=\alg{C}_1\times  \cdots \times \alg{C}_n$ and  $\alg{B}=\alg{D}_1\times \cdots\times \alg{D}_k$ be two finite products of conservative median algebras (thought of as chains). 
Then  $f:\alg{A}\to \alg{B}$ is a median homomorphism if and only if there exist $\sigma:[k]\to[n]$ and monotone maps 
$f_i:\alg{C}_{\sigma(i)}\to \alg{D}_{(i)}$ for $i\in [k]$,   
such that $f=(f_{\sigma(1)}, \ldots, f_{\sigma(n)})$. 

In the particular case when  $k=1$,  
\begin{equation*}
f:\alg{C}_1\times \cdots \times \alg{C}_n\to \alg{D} 
\end{equation*}
 is a median homomorphism 
if and only if there is $j\in [n]$ and a monotone map $g:\alg{C}_j \to \alg{D}$ such that $f=g\circ \pi_j$.
\end{thm}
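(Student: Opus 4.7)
The ``if'' direction is immediate: each projection $\pi_j\colon \alg{C}_1\times\cdots\times\alg{C}_n\to \alg{C}_j$ is a median homomorphism, each monotone map between chains is a median homomorphism by Theorem~\ref{thm:1}, and the class of median homomorphisms is closed under composition and tupling.

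For the ``only if'' direction, I would first reduce the general case to the case $k=1$. If $f\colon \alg{A}\to \alg{D}_1\times\cdots\times \alg{D}_k$ is a median homomorphism, then each component $\pi_i\circ f\colon \alg{A}\to \alg{D}_i$ is again a median homomorphism, so assuming each such component factors as $g_i\circ \pi_{\sigma(i)}$ for some $\sigma(i)\in[n]$ and some monotone $g_i$, tupling yields the desired decomposition. Thus the whole content lies in the case $k=1$.

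For $f\colon \alg{C}_1\times\cdots\times \alg{C}_n\to \alg{D}$, fixing all but the $j$-th coordinate at a tuple $\mathbf{c}$ produces a map $f_{j,\mathbf{c}}\colon \alg{C}_j\to \alg{D}$ that is still a median homomorphism between conservative chains and is thus monotone by Theorem~\ref{thm:1}. The core claim to establish is then that there exists a single coordinate $j^{\ast}\in[n]$ such that $f_{j,\mathbf{c}}$ is constant for every $j\neq j^{\ast}$ and every $\mathbf{c}$; once this holds, $f=g\circ \pi_{j^{\ast}}$ for $g=f_{j^{\ast},\mathbf{c}}$. A natural route, as suggested by the authors' reference to dual topological spaces, is to use duality: the dual of a finite product of conservative median algebras should decompose as a disjoint union of the duals of the factors, the dual of a chain should be connected in the relevant sense, and continuous images of a connected space inside a disjoint union land in a single summand, which translates back to the required one-coordinate factorization.

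The main obstacle is ruling out the possibility that the ``essential coordinate'', or even the direction of monotonicity of $f_{j,\mathbf{c}}$, varies with the fixing $\mathbf{c}$. A direct combinatorial attack would proceed by contradiction: if $f$ depended essentially on two coordinates simultaneously, one could extract a $2\times 2$ subgrid on which $f$ is non-constant in both variables and derive, from the median equation applied to its four corner tuples and the conservativeness of $\alg{D}$, enough equalities to force the image values on the subgrid to collapse. Making either route rigorous amounts to propagating local constancy across the entire domain, which is precisely the job of the connectedness argument in the dual space or of a chained application of the median equation in the combinatorial approach.
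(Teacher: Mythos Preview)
This theorem is quoted from \cite{CMT} and not given a separate proof in the introduction, but the present paper re-derives it algebraically: Lemma~\ref{lem:dec} is exactly your reduction to $k=1$, and Proposition~\ref{med-chain-homo} handles the case $k=1$ (with Theorem~\ref{thm:1} then identifying the resulting unary homomorphism as a monotone map). Your ``if'' direction and your reduction to $k=1$ match the paper. For the core $k=1$ step, however, you describe two possible routes without carrying either out. The duality route is what \cite{CMT} does and is explicitly \emph{not} the approach of this paper; the combinatorial route is the one the paper takes, but your sketch stops at the point you yourself label ``the main obstacle''.

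Concretely, the proof of Proposition~\ref{med-chain-homo} supplies the two steps you are missing. First (Claim~\ref{claim1}): if $f(\mathbf{c}^a_1)>f(\mathbf{c}^b_1)$ for some $\mathbf{c}$, then $f(\mathbf{d}^a_1)>f(\mathbf{d}^b_1)$ for \emph{every} $\mathbf{d}$; this follows by applying the median identity to the triples $(\mathbf{c}^a_1,\mathbf{c}^b_1,\mathbf{d}^a_1)$ and $(\mathbf{c}^a_1,\mathbf{c}^b_1,\mathbf{d}^b_1)$ and using that $\alg{D}$ is a chain. This is precisely the ``propagation of local non-constancy and direction'' that you flag but do not establish. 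Second (Claim~\ref{claim2}): once two coordinates each carry such a uniform strict inequality, one looks at the $2\times 2$ grid of values $f(\mathbf{d}^{x,y}_{1,2})$, uses totality of $\alg{D}$ to order the two middle values, and applies the median to $(\mathbf{d}^{a,a'}_{1,2},\mathbf{d}^{b,a'}_{1,2},\mathbf{d}^{a,b'}_{1,2})$; the componentwise median of this triple is $\mathbf{d}^{a,a'}_{1,2}$, whereas the median of the three totally ordered images is the middle one $f(\mathbf{d}^{b,a'}_{1,2})$, forcing $f(\mathbf{d}^{a,a'}_{1,2})=f(\mathbf{d}^{b,a'}_{1,2})$ and contradicting Claim~\ref{claim1}. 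Your $2\times 2$ subgrid intuition is the right picture, but without these two explicit median computations the argument is a plan rather than a proof.
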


In this paper we are interested in median algebras $\alg{A}$ that are trees (i.e., acyclic undirected graphs) when thought of as semilattices:
a $\wedge$-semilattice (resp. $\vee$-semilattice) is said to be a \emph{tree} if no pair of incomparable elements have an upper (resp. lower) bound.  
As we will see, such median algebras are obtained by relaxing 
the condition of being conservative: we say that a median algebra $\alg{A}$ is a $(2:3)$-\emph{median semilattice} if 
$\me(x,y,z)\in \{x\wedge y,y\wedge z,z\wedge x\}$ for all $ x, y, z \in A$.
Following the same structure as in \cite{CMT}, we will then proceed to describing the median homomorphisms between (products of) such median algebras. 
As conservative median algebras with more than 4 elements  are particular cases of $(2:3)$-median semilattices, these descriptions properly extend results in \cite{CMT}. 
Unlike in the latter paper, where an extensive use of dual spaces of median algebras took place, in the current paper we take a purely algebraic approach.

The paper is organized as follows. After recalling some terminology and basic results on median algebras in Section~\ref{Preliminary}, we introduce the notion of 
$(2:3)$-median semilattice and show, in Section~\ref{sec:TreeMed}, that it captures those median semilattices that are trees. The problem of
describing homomorphisms between products of median algebras is then tackled in Section~\ref{sec:HomTree}. We show that such median preserving mappings can be decomposed into
 homomorphisms of the form
 \begin{equation}\label{eq:manymedto1}
  f:\alg{A}_1\times  \cdots \times \alg{A}_n\to \alg{B}, 
 \end{equation}
 and explicitly describe them in the case when $\alg{B}$ is a tree.
  Our results are illustrated by several examples and their limitations by counter-examples.
 In Section~\ref{final} we comment on natural interpretations of our results, namely, when looked at as impossibility theorems. 
 In particular, we show that Arrow type impossibility results for median-preserving proceedures \eqref{eq:manymedto1} 
 (that basically state that they depend on at most one argument) can hold if and only
 if $\alg{B}$ is a tree.

\section{Preliminary results}\label{Preliminary}

In this section we introduce basic notions, terminology and notation, as well as
 recall some basic facts and well-known  results
about median algebras as ordered sets, that will be used throughout the paper. To avoid a lengthy preliminary section, we will restrict ourselves
to strictly necessary background, and we refer the reader to \cite{Birkhoff1948,Davey2002,Gratzer2003} for further background.   
 
As discussed in Section~\ref{sec:intro}, when there is no danger of ambiguity, 
we will not distinguish between median algebras and median semilattices. They will be denoted 
by bold roman capital letters $\alg{A}, \alg{B}, \ldots$, while their universes will be denoted by  italic roman capital letters $A, B, \ldots$.
We will assume that universes of structures are nonempty, possibly infinite, sets. 

Let $\alg{A}=\struc{A, \me}$ be a median algebra and let $a,b\in A$. The \emph{convex hull} of $\{a,b\}$ or the \emph{interval} from $a$ to $b$,  
denoted by $[a,b]$, is defined by 
$$[a,b]:=\{t\in A\colon t=\me (a,t,b)\}.$$
Intuitively, it is the set of all elements of $A$ in the ``shortest'' paths from $a$ to $b$, thus explaining our choice of notation.

\begin{prop}
 Let $\alg{A}=\struc{A, \me}$ be a median algebra and let $a,b\in A$. Then
 $$[a,b]=\{\me(a,t,b)\colon t\in A\}.$$
\end{prop}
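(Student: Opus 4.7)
The plan is to prove the two inclusions separately. The inclusion $[a,b] \subseteq \{\me(a,t,b) : t \in A\}$ is immediate: if $s \in [a,b]$, then by the very definition of the interval we have $s = \me(a,s,b)$, so taking $t = s$ exhibits $s$ as an element of the right-hand set.

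For the reverse inclusion $\{\me(a,t,b) : t \in A\} \subseteq [a,b]$, I would take an arbitrary $t \in A$, set $s = \me(a,t,b)$, and try to verify that $s = \me(a,s,b)$, i.e.\ that
\[
\me\bigl(a,\me(a,t,b),b\bigr) = \me(a,t,b).
\]
This is precisely an instance of the identity $\me(\me(x,y,z),y,z) = \me(x,y,z)$ that the excerpt records immediately after the Kolibiar--Marcisov\'a axioms, once one uses the symmetry of $\me$ to permute the arguments: applying the identity with $x := t$, $y := a$, $z := b$ gives $\me(\me(t,a,b),a,b) = \me(t,a,b)$, and symmetry rewrites both sides so that they become the desired equality.

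The only subtlety, and the step to be careful about, is to make clear that the identity $\me(\me(x,y,z),y,z) = \me(x,y,z)$ is legitimately available here: the excerpt already asserts it as an immediate consequence of the axioms, so I may cite it directly rather than re-derive it. With that in hand, both inclusions are one-line arguments, and the proposition follows. No further obstacle is expected; the content of the proposition is essentially a reformulation of this absorption identity.
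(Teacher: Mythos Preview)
Your argument is correct: both inclusions are established exactly as one would expect, and your use of the absorption identity $\me(\me(x,y,z),y,z)=\me(x,y,z)$ together with symmetry is precisely the right tool for the nontrivial direction. The paper itself states this proposition without proof, so there is no alternative approach to compare against; your write-up supplies the standard verification.
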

%

Furthermore, we also have the following useful result.

\begin{prop}[{\cite{Birkhoff1947}}]\label{prop:2.2}
 Let $\alg{A}=\struc{A, \me}$ be a median algebra, and consider $a,b,c\in A$.
 Then $$[a,b]\cap[b,c]\cap[a,c]=\{\me(a,b,c)\}.$$
\end{prop}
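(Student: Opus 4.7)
The plan is to prove both inclusions of the set equality, writing $m := \me(a,b,c)$.

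For the easy direction, $m \in [a,b] \cap [b,c] \cap [a,c]$, I would invoke the identity $\me(\me(x,y,z), y, z) = \me(x,y,z)$ recorded in the excerpt. Applied to the three symmetric rewritings $\me(a,b,c) = \me(b,c,a) = \me(c,a,b)$, it yields $\me(m, b, c) = \me(m, c, a) = \me(m, a, b) = m$, so (after a final symmetry swap) $m$ lies in each of the three intervals.

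For uniqueness, suppose $t \in [a,b] \cap [b,c] \cap [a,c]$, so that $t = \me(a,t,b) = \me(b,t,c) = \me(a,t,c)$. The key step is to show $\me(m, x, t) = t$ for each $x \in \{a, b, c\}$. This is done by a direct application of the distributive axiom $\me(\me(x',y',z'), u, v) = \me(x', \me(y',u,v), \me(z',u,v))$, after first rewriting $m$ with $x$ in the leading position. For example,
\[
\me(m, a, t) = \me(\me(a,b,c), a, t) = \me\bigl(a, \me(b,a,t), \me(c,a,t)\bigr) = \me(a, t, t) = t,
\]
where $\me(b,a,t) = \me(c,a,t) = t$ by hypothesis and symmetry of $\me$; cyclic rewrites $m = \me(b,c,a)$ and $m = \me(c,a,b)$ give $\me(m, b, t) = \me(m, c, t) = t$ in exactly the same way.

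The finishing move is to expand $\me(m, t, m)$ in two ways. By symmetry of $\me$ together with $\me(x, x, y) = x$, this equals $m$. On the other hand, one further application of the distributive axiom yields
\[
\me(m, t, m) = \me\bigl(\me(a,b,c), t, m\bigr) = \me\bigl(a, \me(b,t,m), \me(c,t,m)\bigr) = \me(a, t, t) = t,
\]
since the inner medians $\me(b,t,m)$ and $\me(c,t,m)$ were just shown in the previous step to equal $t$. Therefore $m = t$, completing the proof. The argument is pure manipulation of the median axioms; the only real inconvenience is the bookkeeping needed to permute the arguments of $\me$ into the positions demanded by the distributive identity.
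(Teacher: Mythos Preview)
Your proof is correct. Both inclusions are established cleanly from the three defining properties of a median algebra recorded in the paper (symmetry, $\me(x,x,y)=x$, and the distributive identity), together with the derived identity $\me(\me(x,y,z),y,z)=\me(x,y,z)$. The bookkeeping with the argument permutations is accurate throughout; in particular, the final step $\me(m,t,m)=\me(\me(a,b,c),t,m)=\me(a,\me(b,t,m),\me(c,t,m))=\me(a,t,t)=t$ is a legitimate application of the distributive axiom once Step~1 has been done.

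As for the comparison: the paper does not actually prove Proposition~\ref{prop:2.2}. It is quoted as a classical fact from Birkhoff--Kiss \cite{Birkhoff1947} and stated without proof. Your argument therefore supplies what the paper omits, and does so in a fully self-contained way using only the axioms already introduced in Section~\ref{sec:intro}.
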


 Let $\alg{A}=\struc{A, \me}$ be a median algebra and let $a,b\in A$. 
 Each convex hull of a 2-element set $\{a,b\}$ can be endowed with a distributive lattice structure. 
 To this purpose, for $a\in A$ recall the binary operation $\wedge_{a}\colon A^2\to A$ defined by: 
$$x\wedge_{a}y:=\me (a,x,y),\quad \text{for }\, (x,y)\in A^2.$$ 
Note that such an operation is idempotent, commutative and associative. 
Furthermore, it is not difficult to see that on $[a,b]$,  $\wedge_{a}$ and $\wedge_{b}$ verify the absortion laws, and that
for every $s\in [a,b]$, we have
$$a\leq_a s\leq_a b\qquad \text{and}\qquad  b\leq_b s\leq_b a.$$
In fact, we have the following well-known result; see, e.g., \cite{Bandelt1983}.

\begin{prop}\label{distributive-Interv}
 Let $\alg{A}=\struc{A, \me}$ be a median algebra. For every $a,b\in A$, 
 $\struc{[a,b],\wedge_{a},\wedge_{b}}$ is a distributive lattice with $a$ and $b$ as the least and greatest elements, respectively. 
\end{prop}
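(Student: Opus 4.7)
The plan is to avoid a direct verification of the lattice axioms and instead exploit the characterizations of median semilattices recalled in Section~\ref{sec:intro}.

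First I would show that $[a,b]$ is a median subalgebra of $\alg{A}$. Closure under $\me$ follows at once from the distributive axiom: for $x,y,z\in[a,b]$,
$$\me(a,\me(x,y,z),b) = \me(\me(x,y,z),a,b) = \me(x,\me(y,a,b),\me(z,a,b)) = \me(x,y,z),$$
since $\me(y,a,b)=y$ and $\me(z,a,b)=z$ by definition of $[a,b]$. Hence $\me(x,y,z)\in[a,b]$.

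Next I would identify $[a,b]$ with a principal ideal of the median semilattice $\struc{A,\leq_a}$. Indeed, $x\in[a,b]$ iff $\me(a,x,b)=x$ iff $x\leq_a b$, so $[a,b]=\{y\in A:y\leq_a b\}$. By the Sholander characterization recalled in Section~\ref{sec:intro}, this principal ideal is a distributive lattice whose meet is $\wedge_a$, with least element $a$ and greatest element $b$.

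It only remains to identify the lattice join with $\wedge_b$. Writing $\vee$ for this join and invoking $\me=\me_{\leq_a}$ on the lattice $[a,b]$, for every $x,y\in[a,b]$ one computes
$$\me(b,x,y) = (b\wedge_a x)\vee(b\wedge_a y)\vee(x\wedge_a y) = x\vee y\vee(x\wedge_a y) = x\vee y,$$
because $x,y\leq_a b$ forces $b\wedge_a x=x$ and $b\wedge_a y=y$, and absorption yields the last equality. Thus $\wedge_b$ coincides with $\vee$ on $[a,b]$, which completes the argument.

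The main obstacle I anticipate is purely conceptual: resisting a direct and tedious verification of commutativity, associativity, absorption, and the distributive law for $\wedge_a$ and $\wedge_b$ on $[a,b]$. Routing through the Sholander--Avann correspondence already collected in the preliminaries reduces the whole proposition to a single closure check plus one median-to-lattice computation.
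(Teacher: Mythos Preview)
The paper does not give its own proof of this proposition: it is stated as a well-known fact with a reference to Bandelt--Hedl\'{\i}kov\'{a} \cite{Bandelt1983}. Your argument is therefore not competing against anything in the text, and it is correct. The closure computation is clean, the identification $[a,b]=\downarrow_a b$ follows immediately from the definition of $\leq_a$, and the appeal to Sholander's characterization (already recalled in Section~\ref{sec:intro}) delivers the distributive lattice structure with bottom $a$ and top $b$ for free. The only place that could use one extra word is the final display: the symbol $\vee$ in the identity $\me=\me_{\leq_a}$ denotes the supremum in $\struc{A,\leq_a}$, whereas you introduced $\vee$ as the lattice join on $[a,b]$. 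Since $[a,b]=\downarrow_a b$ is downward closed and $\me(b,x,y)\in[a,b]$ by the closure step you already proved, the two notions of join coincide on the elements in question, so the identification is harmless.
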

%
%

Moreover, we also have a general description of median-preserving mappings between two median algebras. 
%
%
%
%

\begin{thm}\label{Hom-Median}
Let $\alg{A}$  and  $\alg{B}$ be median algebras, and consider 
a mapping $f \colon A\to B$.
 Then the following assertions are equivalent.
\begin{itemize}

 \item[$(i)$] $f$ is a median-homomorphism.


 \item[$(ii)$] For all $p\in A$, $f \colon \struc{A, \wedge_{p}}\to \struc{B, \wedge_{f(p)}}$ is a semilattice-homomorphism.

 \item[$(iii)$] For all $p\in A$, $f \colon \struc{A, \leq_{p}}\to \struc{B,  \leq_{f(p)}}$ is an order-homomorphism.

 \item[$(iv)$]  For all $a,b\in A$, $f([a,b])\subseteq [f(a),f(b)]$.
 
 
\end{itemize}
\end{thm}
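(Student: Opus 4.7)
The plan is to prove the equivalences by the cycle $(i) \Rightarrow (ii) \Rightarrow (iii) \Rightarrow (iv) \Rightarrow (i)$, since each step is short once we unfold the right definition.

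For $(i) \Rightarrow (ii)$, I would fix $p \in A$ and simply compute: for any $x,y \in A$,
\[
f(x \wedge_p y) = f(\me{p,x,y}) = \me{f(p), f(x), f(y)} = f(x) \wedge_{f(p)} f(y),
\]
using that $f$ preserves the median. For $(ii) \Rightarrow (iii)$, I would recall that a $\wedge$-semilattice homomorphism is automatically order-preserving: if $x \leq_p y$, then $x \wedge_p y = x$, hence $f(x) \wedge_{f(p)} f(y) = f(x)$, which is precisely $f(x) \leq_{f(p)} f(y)$.

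For $(iii) \Rightarrow (iv)$, the key observation is to apply $(iii)$ with $p = a$. If $t \in [a,b]$, then $t = \me{a,t,b}$, so in the semilattice $\struc{A,\leq_a}$ one has $a \leq_a t \leq_a b$. Using the order-homomorphism property at $p=a$, we get $f(a) \leq_{f(a)} f(t) \leq_{f(a)} f(b)$. Unfolding the second inequality, $\me{f(a), f(t), f(b)} = f(t)$, i.e.\ $f(t) \in [f(a), f(b)]$. Finally, for $(iv) \Rightarrow (i)$, I would invoke Proposition~\ref{prop:2.2}: for any $x,y,z \in A$, $\me{x,y,z} \in [x,y] \cap [y,z] \cap [x,z]$, so
\[
f(\me{x,y,z}) \in f([x,y]) \cap f([y,z]) \cap f([x,z]) \subseteq [f(x),f(y)] \cap [f(y),f(z)] \cap [f(x),f(z)],
\]
and the right-hand side equals the singleton $\{\me{f(x), f(y), f(z)}\}$ again by Proposition~\ref{prop:2.2}. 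This forces $f(\me{x,y,z}) = \me{f(x),f(y),f(z)}$.

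I do not expect any serious obstacle: the content is purely unwinding definitions together with the intersection-of-intervals characterization of the median. The mildly delicate step is $(iii) \Rightarrow (iv)$, where one must remember that the interval $[a,b]$ coincides with the order-interval in $\struc{A, \leq_a}$ with top $b$, so that one can transport the chain $a \leq_a t \leq_a b$ across $f$ and land in $[f(a), f(b)]$ on the other side; once this is noticed the proof is immediate.
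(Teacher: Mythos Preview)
Your proposal is correct and follows essentially the same cycle $(i)\Rightarrow(ii)\Rightarrow(iii)\Rightarrow(iv)\Rightarrow(i)$ as the paper, with the same computations at each step and the same appeal to Proposition~\ref{prop:2.2} for $(iv)\Rightarrow(i)$. If anything, you spell out $(ii)\Rightarrow(iii)$ in slightly more detail than the paper, which simply declares it ``Straightforward''.
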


\begin{proof} We prove the following sequence of implications 
$$(i)\implies (ii)\implies (iii)\implies (iv)\implies  (i)$$
thus showing that they are all equivalent.

$(i)\implies (ii)$: Suppose that $f$ is a median-homomorphism and $p\in A$. Then
$$f(a\wedge_{p}b)=f(\me(p,a,b))=\me(f(p),f(a),f(b))=f(a)\wedge_{f(p)}f(b).$$

$(ii)\implies (iii)$: Straightforward.

$(iii)\implies (iv)$: Let $t\in [a,b]$. Then $a\leq_a t\leq_a b$, and thus $f(a)\leq_{f(a)} f(t)\leq_{f(a)} f(b)$ from which it follows that $f(t)\in [f(a),f(b)]$.
In other words, $f([a,b])\subseteq [f(a),f(b)]$.

$(iv)\implies (i)$: Let $a,b,c\in A$. By Proposition~\ref{prop:2.2}, we have 
$$\{\me(a,b,c)\}=[a,b]\cap[b,c]\cap[c,a].$$
By $(iv)$, we then conclude that 
$$f(\me(a,b,c))\in [f(a),f(b)]\cap[f(b),f(c)]\cap[f(c),f(a)]=\{\me(f(a),f(b),f(c))\}.$$
Therefore, $f(\me(a,b,c))=\me(f(a),f(b),f(c)).$
\end{proof}

\begin{cor}\label{cor:new}
 Let $f\colon A\to B$ be a median homomorphism between two median algebras $\alg{A}$  and  $\alg{B}$. Then, for every $a,b\in A$,   
 $f$ is also a lattice homomorphism from $\struc{[a,b],\wedge_{a},\wedge_{b}}$ to $\struc{[f(a),f(b)],\wedge_{f(a)},\wedge_{f(b)}}$.
\end{cor}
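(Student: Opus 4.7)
The plan is to derive the corollary directly from Theorem~\ref{Hom-Median} by instantiating its various equivalent conditions at the two endpoints $a$ and $b$ of the interval. Note first that by Proposition~\ref{distributive-Interv}, both $\struc{[a,b],\wedge_{a},\wedge_{b}}$ and $\struc{[f(a),f(b)],\wedge_{f(a)},\wedge_{f(b)}}$ are distributive lattices, so it suffices to verify that the restriction of $f$ to $[a,b]$ (i) takes values in $[f(a),f(b)]$, and (ii) preserves both binary operations $\wedge_a$ and $\wedge_b$.

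For the first item, I would invoke Theorem~\ref{Hom-Median}$(iv)$: since $f$ is a median homomorphism, $f([a,b])\subseteq [f(a),f(b)]$, so the restriction $f\restriction_{[a,b]}$ does land in the target lattice.

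For the second item, I would apply Theorem~\ref{Hom-Median}$(ii)$ twice: once with $p=a$ and once with $p=b$. The first application gives, for every $x,y\in A$ (hence in particular for $x,y\in[a,b]$),
\[
f(x\wedge_{a}y)=f(x)\wedge_{f(a)}f(y),
\]
while the second gives
\[
f(x\wedge_{b}y)=f(x)\wedge_{f(b)}f(y).
\]
Combining these two identities with the fact that $f(x),f(y)\in[f(a),f(b)]$ (from step one), we conclude that $f\restriction_{[a,b]}$ preserves both lattice operations, and is therefore a lattice homomorphism from $\struc{[a,b],\wedge_{a},\wedge_{b}}$ to $\struc{[f(a),f(b)],\wedge_{f(a)},\wedge_{f(b)}}$.

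There is no real obstacle here: the corollary is a straightforward packaging of Theorem~\ref{Hom-Median} together with Proposition~\ref{distributive-Interv}. The only minor subtlety to keep in mind is that the operations $\wedge_a,\wedge_b$ are defined globally on $A$ via the median, but restrict to the genuine meet/join of the interval lattice on $[a,b]$; the same applies on the target side with $f(a),f(b)$, so the homomorphism property transfers cleanly from the ambient median structure to the induced lattice structures.
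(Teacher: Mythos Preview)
Your proof is correct and is exactly the intended argument: the paper states this as an immediate corollary of Theorem~\ref{Hom-Median} (with Proposition~\ref{distributive-Interv} in the background) and gives no separate proof, so your derivation---using part~$(iv)$ for the range and part~$(ii)$ at $p=a$ and $p=b$ for the two lattice operations---is precisely the spelling-out that is implicit in the paper.
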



\section{Trees as median algebras}
\label{sec:TreeMed}

In this section we focus on median algebras that are trees. 
In the particular case of chains, it was shown that these median algebras are exactly those that are conservative, i.e.,
$$\me(x,y,z)\in \{x,y,z\}, \quad x, y, z \in A.$$

To identify median semilattices that are trees, we propose the following relaxation of conservativeness.
We say that a median algebra, thought of as a median semilattice $\alg{A}=\struc{A, \wedge}$, is a $(2:3)$-\emph{median semilattice} if 
for every $ x, y, z \in A$, we have 
$$\me(x,y,z):=(x\wedge y)\vee (x\wedge z)\vee (z\wedge y)\in \{x\wedge y,y\wedge z,z\wedge x\}.$$ 

We start with a simple yet useful observation.

\begin{rem}\label{convexTreeChain}
 Let $\alg{A}=\struc{A, \me}$ be a median algebra such that $\struc{A, \leq_c}$ is a tree for some (or, equivalently, all) $c\in A$.
 Then, for all $a,b\in A$, the convex hull $[a,b]$  is a chain from $a$ to $b$. 
\end{rem}

\begin{thm}\label{Median-Tree}
Let $\alg{A}=\struc{A, \me}$ be a median algebra. Then the following assertions are equivalent.
\begin{itemize}

 \item[$(i)$] There is $p\in A$ such that $\alg{A}=\struc{A, \leq_p}$ is a $(2:3)$-median semilattice.

 \item[$(ii)$] For all $p\in A$, $\alg{A}=\struc{A, \leq_p}$ is a $(2:3)$-median semilattice.

 \item[$(iii)$] There is $p\in A$ such that  $\alg{A}=\struc{A, \leq_p}$ is a tree.

 \item[$(iv)$]  For all $p\in A$,   $\alg{A}=\struc{A, \leq_p}$ is a tree.
 
 \item[$(v)$]  For all $a,b\in A$,  the bounded distributive lattice $\struc{[a,b],\wedge_{a},\wedge_{b}}$ is a chain.
\end{itemize}
\end{thm}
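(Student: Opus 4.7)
The plan is to prove the cycle $(ii)\Rightarrow(i)\Rightarrow(iii)\Rightarrow(v)\Rightarrow(iv)\Rightarrow(ii)$; the implications $(ii)\Rightarrow(i)$ and $(iv)\Rightarrow(iii)$ are immediate. The key insight is that the $(2:3)$ inclusion forbids exactly the configuration of two incomparable elements sharing an upper bound, and that the tree property forces every convex hull to collapse into a chain.

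For $(i)\Rightarrow(iii)$, I fix $p$ realizing $(i)$ and argue by contradiction: if $x,y$ are $\leq_p$-incomparable but bounded above by some $z$, then $x\wedge_p z=x$ and $y\wedge_p z=y$, and the pairwise meets $x\wedge_p y,x,y$ are pairwise bounded, so the join $x\vee_p y$ exists in the median semilattice. Formula~\eqref{eqn:intro} then yields $\me(x,y,z)=x\vee_p y$, which cannot equal any of $x\wedge_p y,x,y$ since $x,y$ are incomparable. For $(iv)\Rightarrow(ii)$, I establish the structural lemma that in any tree $\wedge$-semilattice the three pairwise meets of $x,y,z$ satisfy: two of them coincide and the third dominates. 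This follows because principal ideals in a tree are chains, so meets lying in a common principal ideal are comparable, and a short case chase pins down the equality. The median formula then collapses to the dominant meet, placing $\me(x,y,z)$ inside $\{x\wedge_p y,y\wedge_p z,x\wedge_p z\}$.

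The bulk of the argument is $(iii)\Rightarrow(v)\Rightarrow(iv)$. For $(v)\Rightarrow(iv)$, given $p$ and $x,y\leq_p z$, a direct calculation gives $\me(p,t,z)=t\wedge_p z$, so $[p,z]=\{t:t\leq_p z\}$; the order of the distributive lattice $\struc{[p,z],\wedge_p,\wedge_z}$ from Proposition~\ref{distributive-Interv} coincides with $\leq_p$, so $(v)$ makes $x,y$ comparable, contradicting any putative incomparability. For $(iii)\Rightarrow(v)$, I take $\struc{A,\leq_p}$ to be a tree, fix $a,b\in A$, and set $c=a\wedge_p b$. The central claim is $[a,b]=[c,a]_p\cup[c,b]_p$, where the right-hand intervals are poset intervals in $\leq_p$. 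The $\supseteq$ inclusion is a direct computation of $\me(a,t,b)$; for $\subseteq$, the identity $\me(a,t,b)=(a\wedge_p t)\vee_p c\vee_p(b\wedge_p t)=t$ together with the chain structure of $\downarrow t$ forces one of the joinands to equal $t$, yielding $t\leq_p a$ or $t\leq_p b$ (and $t\geq_p c$ since $c$ appears as a joinand). A direct evaluation of $\me(a,t_1,t_2)$ in the three possible cases (both in $[c,a]_p$, both in $[c,b]_p$, or one in each) then shows that $\leq_a$ linearly orders $[a,b]$, so its distributive lattice from Proposition~\ref{distributive-Interv} is a chain.

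The main obstacle is the decomposition $[a,b]=[c,a]_p\cup[c,b]_p$ in $(iii)\Rightarrow(v)$, together with the verification that $\leq_a$ reassembles the two branches into a single chain. Both steps rely on the tree-chain property of principal ideals and careful manipulation of the median formula; once they are in place, the rest of the cycle reduces to structural bookkeeping.
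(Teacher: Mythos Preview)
Your proof is correct, but it takes a genuinely different route from the paper's. The paper proves the chain $(ii)\Rightarrow(i)\Rightarrow(iii)\Rightarrow(iv)\Rightarrow(v)\Rightarrow(iv)\Rightarrow(ii)$, whereas you go $(ii)\Rightarrow(i)\Rightarrow(iii)\Rightarrow(v)\Rightarrow(iv)\Rightarrow(ii)$. The substantive difference lies in how $(v)$ is reached: the paper declares $(iii)\Rightarrow(iv)$ ``straightforward'' (re-rooting a tree yields a tree) and then derives $(v)$ from $(iv)$ in two lines, noting that any $s,t\in[a,b]$ lie below $b$ in $\struc{A,\leq_a}$ and so are comparable. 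You bypass $(iii)\Rightarrow(iv)$ entirely and establish $(iii)\Rightarrow(v)$ directly via the decomposition $[a,b]=[c,a]_p\cup[c,b]_p$ with $c=a\wedge_p b$, followed by a three-case analysis of $\me(a,t_1,t_2)$. Your $(i)\Rightarrow(iii)$ also differs in flavor: you invoke Sholander's join-existence criterion for median semilattices to produce $x\vee_p y$ and obtain the contradiction in one stroke, while the paper computes with the median axioms directly and performs a case split on the value of $\me(c,a,b)$. Your approach is more self-contained (it does not lean on the re-rooting claim, which the paper does not actually justify) and gives a concrete picture of intervals in a tree, at the cost of more bookkeeping; the paper's route is shorter once one accepts $(iii)\Rightarrow(iv)$.
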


\begin{proof} We prove the following sequence of implications 
$$(ii)\implies (i)\implies (iii)\implies (iv)\implies (v)\implies (iv)\implies (ii),$$
thus showing that they are all equivalent.

 $(ii)\implies (i)$: Straightforward.

 $(i)\implies (iii)$:  Suppose that $\alg{A}=\struc{A, \leq_c}$ is a $(2:3)$-median semilattice.
 To show that it is a tree it suffices to show that for every pair $a,b\in A$ with a common upper bound,
 we have $a \leq_p b$ or $b \leq_p a$. 
 
 So suppose that $c\in A$ is a common upper bound of $a$ and $b$, that is, $\me(p,a,c)=a$ and $\me(p,b,c)=b$. 
 Since $\alg{A}=\struc{A, \leq_c}$ is a $(2:3)$-median semilattice and $c$ is an upper bound of $a$ and $b$, 
 \begin{equation}\label{eq:1}
  \me(c,a,b)\in \{\me(p,a,b), a, b\}.
 \end{equation}
Note that since $\me$ is a median,   
$$\me(p,\me(a,c,b),a)=\me(b,\me(a,a,p),\me(c,a,p))=\me(b,a,a)=a,$$
and thus $a\leq_p \me(c,a,b)$. Similarly, we have $b\leq_p \me(c,a,b)$.

Now, if $ \me(c,a,b)=\me(p,a,b)$ in \eqref{eq:1}, then $\me(c,a,b)\leq_pa$ since $\me(p,a,b)\leq_pa$. Hence, $a=\me(c,a,b)$.
Similarly, we also have $b=\me(c,a,b)$, and thus $a=b$.  

If $\me(c,a,b)=a$ in \eqref{eq:1}, then 
$$\me(p,a,b)=\me(\me(c,a,b),p,b)=\me(a,\me(c,p,b),\me(b,p,b))=\me(a,b,b)=b,$$
and thus  $b \leq_p a$. 
Similarly, if $\me(c,a,b)=b$ in \eqref{eq:1}, we conclude that $a \leq_p b$.

Since in all possible cases we have that $a \leq_p b$ or $b \leq_p a$,  $\alg{A}=\struc{A, \leq_p}$ is a tree.

 $(iii)\implies (iv)$: Straightforward.
 
 $(iv)\implies (v)$: Let $s,t\in [a,b]$. Thus $a\leq_a s,t\leq_a b$, and $s,t$ have a common upper bound in $\struc{A, \leq_a}$.  As $\struc{A, \leq_a}$ is a tree, $s$ and $t$ 
 cannot be incomparable, and hence the interval $[a,b]$ is a chain.

 $(v)\implies (iv)$: Let $a,b,p\in A$ and suppose that $a$ and $b$ are incomparable  in $\alg{A}=\struc{A, \leq_p}$. Note that $p\leq_p a,b$.
 
 For the sake of a contradiction, suppose that  $a$ and $b$  have a common upper bound $d\in A$, that is,
 $p\leq_p a,b\leq_pd$. However, by $(v)$ the interval $[p,d]$ is a chain and thus we must have $a \leq_p b$ or $b\leq_p a$, which constitutes the desired contradiction.  
 
 $(iv)\implies (ii)$: Let $p\in A$. As $\alg{A}=\struc{A, \leq_p}$ is a tree, for every $a,b,c\in A$, we have that $a\wedge b, b\wedge c$ and $ c\wedge a$ 
 are pairwise comparable, and thus 
 $$\me(a,b,c)\in \{a\wedge b,b\wedge c,c\wedge a\}.$$
 Hence, $\alg{A}=\struc{A, \leq_p}$ is a $(2:3)$-median semilattice.
\end{proof}

\begin{rem}\label{TreeChain} Other equivalent descriptions of trees as median semilattices are given in \cite{Bandelt1983,Sholander1954}.
\end{rem}

\begin{rem} As mentioned, conservative median semilattices $\mathbf{A}$ are $(2:3)$-median semilattices, whenever $|A|\geq 5$.
Hence, Lemma 3 and  Theorem 3 in \cite{CMT} follow as corollaries of Theorem \ref{Median-Tree}.
In fact, the conservative median semilattice $\{0,1\}^2$ is the only ``pathological'' case.
\end{rem}

%

\section{Median-homomorphisms over trees}\label{sec:HomTree}

Let $\alg{A}_1,  \cdots , \alg{A}_n$  and $\alg{B}_1,  \cdots , \alg{B}_k$ be median algebras. We seek to decsribe 
median-homomorphisms of the form
\begin{equation}\label{eq:med-hom}
f\colon \alg{A}_1\times  \cdots \times \alg{A}_n\to \alg{B}_1\times   \cdots \times  \alg{B}_k.
\end{equation}

For each $j\in [k]$, let $\pi_j$ denote the $j$-th projection on $\alg{B}_1\times   \cdots \times  \alg{B}_k$, i.e., the mapping
$$\pi_j\colon \alg{B}_1\times   \cdots \times  \alg{B}_k\to \alg{B}_j.$$
It is easy to see that if $f$ is a median-homomorphism, then 
the composition 
$$g_j=\pi_j\circ f\colon \alg{A}_1\times  \cdots \times \alg{A}_n\to \alg{B}_j$$ is a median-homomorphism. 
As the converse holds trivially, possibly with repeated $g_j$'s (as they are not necessarily pairwise distinct), we get the following result.

\begin{lemma}\label{lem:dec} Let $\alg{A}_1,  \cdots , \alg{A}_n$  and $\alg{B}_1,  \cdots , \alg{B}_k$ be median algebras. 
 A mapping $f\colon \alg{A}_1\times  \cdots \times \alg{A}_n\to \alg{B}_1\times   \cdots \times  \alg{B}_k$ 
 is a median-homomorphism if and only if there are  median-homomorphisms 
 $$g_j\colon \alg{A}_1\times  \cdots \times \alg{A}_n\to \alg{B}_j, \quad j\in [k],$$ such that
$f=(g_{1},\ldots, g_{k})$. Moreover, we have $g_j=\pi_j\circ f$ for the projection 
$$\pi_j\colon \alg{B}_1\times   \cdots \times  \alg{B}_k\to \alg{B}_j,\quad j\in [k].$$
\end{lemma}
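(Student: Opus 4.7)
The plan is to reduce both directions to the fact that, by definition, the median operation on the product algebra $\alg{B}_1\times \cdots \times \alg{B}_k$ is computed coordinate-wise. Once that observation is in place, each projection $\pi_j\colon \alg{B}_1\times \cdots \times \alg{B}_k\to \alg{B}_j$ is immediately a median-homomorphism, and the lemma becomes a short verification.

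For the forward implication, assuming $f$ is a median-homomorphism, I would simply note that $g_j:=\pi_j\circ f$ is a composition of median-homomorphisms and hence a median-homomorphism from $\alg{A}_1\times \cdots \times \alg{A}_n$ to $\alg{B}_j$. The identity $f=(g_1,\dots,g_k)$ is then the tautological statement that any map into a product is recovered from its coordinate projections. This also establishes the ``moreover'' clause in one stroke.

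For the backward implication, assume each $g_j$ is a median-homomorphism and define $f=(g_1,\dots,g_k)$. Given $x,y,z\in \alg{A}_1\times \cdots \times \alg{A}_n$, the $j$-th coordinate of $f(\me(x,y,z))$ is
\[
g_j(\me(x,y,z))=\me(g_j(x),g_j(y),g_j(z)),
\]
using the hypothesis on $g_j$. Since the median on $\alg{B}_1\times \cdots \times \alg{B}_k$ is coordinate-wise, this coincides with the $j$-th coordinate of $\me(f(x),f(y),f(z))$. As this holds for every $j\in[k]$, we obtain $f(\me(x,y,z))=\me(f(x),f(y),f(z))$, as required.

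The proof is entirely routine; there is no substantive obstacle. The only step worth articulating carefully is the coordinate-wise nature of the median on a finite product of median algebras, which follows directly from the fact that the median axioms $\me(x,x,y)=x$ and $\me(\me(x,y,z),t,u)=\me(x,\me(y,t,u),\me(z,t,u))$ are preserved by componentwise definitions (they are quasi-equations in a single sort and hence pass to direct products). Once this is noted, both implications are immediate.
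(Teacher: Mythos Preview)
Your proof is correct and follows essentially the same approach as the paper: the paper observes that each $\pi_j$ is a median-homomorphism, so $g_j=\pi_j\circ f$ is a median-homomorphism whenever $f$ is, and states that the converse holds trivially since the median on the product is computed coordinate-wise. Your write-up simply makes these routine verifications explicit.
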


Hence the description of median-homomorphisms \eqref{eq:med-hom} boils down to describing  median-homorphisms of the form
$$ f:\alg{A}_1\times  \cdots \times \alg{A}_n\to \alg{B}$$
from a finite product of median algebras $\alg{A}_1,  \cdots , \alg{A}_n$ (thought of as median semilattices)
to a median algebra $\alg{B}$. The general answer to this question still eludes us, but 
we can provide explicit descriptions of such mappings when  $\alg{B}$ is of a special type.
We start by making some useful observations.

A function $ f:\alg{A}_1\times  \cdots \times \alg{A}_n\to \alg{B}$ 
is called an \emph{$n$-median-homomrphism} if its ``unary sections'' $f_i\colon \alg{A}_i\to \alg{B}$ (obtained from $f$ by fixing all but the $i$-th argument)  are median-homomorphisms.
It is easy to verify that every median-homomorphism is an $n$-median-homomorphism.
However, the converse is not true. To see this, let $\alg{A}_1= \alg{A}_2= \alg{B}$ be the two element median algebra, and consider
$f\colon \alg{A}_1\times \alg{A}_2\to \alg{B}$ given by $f(1,1)=f(0,1)=f(1,0)=1$ and $f(0,0)=0$. 
Clearly, $f$ is a 2-median homomorphism but it is not a median-homomorphism: for $\mathbf{a}=(0,0), \mathbf{b}=(0,1)$ and $\mathbf{c}=(1,0)$ we have 
$$
f(\me(\mathbf{a},\mathbf{b},\mathbf{c}))=f(0,0)\neq \me(1,1,0)=\me(f(0,1),f(1,0),f(0,0)).
$$

Now we consider some particular cases, namely, when $\alg{B}$  is a chain (or, equivalently, a conservative median algebra) and 
when it is a $(2:3)$-semilattice (or, equivalently, when $\alg{B}$ is a tree). 

We start with the case when $\alg{B}$ (thought of as median semilattice) is a chain. 

\begin{prop}\label{med-chain-homo}
 Let $\alg{A}_1,  \cdots , \alg{A}_n$ be median algebras and let $\alg{B}$  be a chain with $|B|\geq 2$.
 A mapping $ f:\alg{A}_1\times  \cdots \times \alg{A}_n\to \alg{B}$ is a median-homomorphism if and only if 
 there is an $i\in [n]$ and a median-homomorphism $g\colon  \alg{A}_i\to \alg{B}$ such that
 $$
 f(x_1,\ldots,x_n)=g(x_i),\quad \text{for all}\quad (x_1,\ldots,x_n)\in {A}_1\times \cdots \times {A}_n.
 $$
 In other words, $f$ is an essentially unary median-homomorphism. 
\end{prop}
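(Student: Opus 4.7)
First I would handle the ``if'' direction, which is immediate: $\pi_i$ is a median-homomorphism since medians in a product are computed coordinatewise, and the composition of median-homomorphisms is a median-homomorphism. For the converse, my strategy is to show that any non-constant median-homomorphism into a chain depends essentially on at most one coordinate; the corresponding unary section will then be the desired $g$, and it is a median-homomorphism as a restriction of $f$ to a slice.

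The central tool I would need is a $2\times 2$ rigidity lemma that exploits the chain structure of $\alg{B}$. Fixing distinct $i,j\in[n]$, elements $a,a'\in A_i$, $b,b'\in A_j$, and a choice $\vec r$ of the remaining coordinates, let $\mathbf{x},\mathbf{y},\mathbf{z},\mathbf{w}$ denote the four tuples whose $(i,j)$-entries are $(a,b),(a',b),(a,b'),(a',b')$ and whose other coordinates are $\vec r$. A direct coordinatewise computation yields $\me(\mathbf{x},\mathbf{y},\mathbf{z})=\mathbf{x}$, $\me(\mathbf{y},\mathbf{x},\mathbf{w})=\mathbf{y}$, $\me(\mathbf{z},\mathbf{x},\mathbf{w})=\mathbf{z}$, and $\me(\mathbf{w},\mathbf{y},\mathbf{z})=\mathbf{w}$. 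Applying $f$ and working inside the chain $\alg{B}$, each of $f(\mathbf{x}),f(\mathbf{y}),f(\mathbf{z}),f(\mathbf{w})$ must lie between two of the others; a short chain-analysis then forces exactly one of two pairings: either $f(\mathbf{x})=f(\mathbf{y})$ and $f(\mathbf{z})=f(\mathbf{w})$ (``coordinate $i$ is locally inessential''), or $f(\mathbf{x})=f(\mathbf{z})$ and $f(\mathbf{y})=f(\mathbf{w})$ (``coordinate $j$ is'').

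Next, assuming $f$ is non-constant, a coordinate-by-coordinate path between tuples of differing $f$-value produces $\vec u,\vec u'$ agreeing outside a single coordinate---which I may take to be coordinate $1$---with $f(\vec u)\neq f(\vec u')$. For each $j\geq 2$ and each replacement $b'\in A_j$ of $u_j$, applying the rigidity lemma to coordinates $(1,j)$ with the pair $(u_1,u_1')$ rules out the ``coord-$1$-inessential'' branch (since $f(\vec u)\neq f(\vec u')$), so the other branch forces $f$ to be unchanged under the $j$-coordinate replacement at both $u_1$ and $u_1'$. Iterating through $j=2,\ldots,n$---the coord-$1$ separation is preserved by the equalities just derived, keeping the lemma applicable in the same branch throughout---I obtain constants $\alpha\neq\beta$ in $\alg{B}$ with $f(u_1,\vec c)\equiv\alpha$ and $f(u_1',\vec c)\equiv\beta$ for every $\vec c\in\prod_{k\neq 1}A_k$.

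Finally, I would apply the rigidity lemma a second time, now with the pair $(u_1,a)$ for an arbitrary $a\in A_1$, to show that $f(a,\cdot)$ is also constant on $\prod_{k\neq 1}A_k$: in the ``coord-$1$-inessential'' branch both endpoints reduce to $\alpha$, and in the other branch the equality $f(a,\vec c)=f(a,\vec c')$ for $\vec c,\vec c'$ differing in one coordinate is direct; iterating over coordinates gives the conclusion. It then follows that $f=g\circ\pi_1$ with $g(a):=f(a,\vec c_0)$ for any fixed background $\vec c_0$, and $g$ is a median-homomorphism as a restriction of $f$. The hard part is the propagation in the first iteration: one must verify that the coord-$1$ discrepancy survives through each successive background change so that the rigidity lemma continues to produce the needed branch rather than the alternative, but this is ensured by induction using precisely the equalities produced at each previous step.
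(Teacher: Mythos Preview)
Your proof is correct. Both arguments hinge on the same $2\times 2$ median identities (your rigidity lemma is essentially the paper's Claim~2), but the paper organizes the proof differently and more economically. Instead of propagating local equalities along one-coordinate paths, the paper proves a one-shot global step (Claim~1): from $f(\mathbf{c}^a_1)>f(\mathbf{c}^b_1)$ and the single identity $\me(\mathbf{c}^a_1,\mathbf{c}^b_1,\mathbf{d}^x_1)=\mathbf{c}^x_1$ it deduces $f(\mathbf{d}^a_1)>f(\mathbf{d}^b_1)$ for \emph{every} $\mathbf{d}$ at once. This immediately places both ``essential'' coordinates in conflict on a $2\times 2$ block, yielding a contradiction without any induction over the background. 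Your route, by contrast, is constructive: you directly identify the essential coordinate and verify constancy in the remaining ones, at the cost of the iterative bookkeeping you flag at the end. The paper's trick buys brevity; your version buys an explicit description of $g$ and avoids the by-contradiction framing.
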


\begin{proof}
 
 Clearly, sufficiency holds. To prove necessity we show that $f$ cannot depend on two different arguments. 
 
 For the sake of contradiction, suppose that $f$ depends on at least 2 of its arguments. Without loss of generality, we may assume that they are the first two.
 Hence, there are $a,b\in \alg{A}_1$ and $\mathbf{c}\in \alg{A}_1\times  \cdots \times \alg{A}_n$ such that $f(\mathbf{c}^a_1)>f(\mathbf{c}^b_1)$, where 
 $\mathbf{c}^x_i$ denotes the tuple obtained from $\mathbf{c}$ by setting its $i$-th component to $x$.
 
 \begin{claim}\label{claim1}
  For every $\mathbf{d}\in \alg{A}_1\times  \cdots \times \alg{A}_n$, $f(\mathbf{d}^a_1)>f(\mathbf{d}^b_1)$.
 \end{claim}
\begin{proof}
Just consider the tuples $\mathbf{c}^a_1, \mathbf{c}^b_1,\mathbf{d}^a_1$ and $\mathbf{c}^a_1, \mathbf{c}^b_1,\mathbf{d}^b_1$,
and use the fact that $f$ is a median homomorphism. 
\end{proof}

Similarly, there are  $a',b'\in \alg{A}_2$ and $\mathbf{c}'\in \alg{A}_1\times  \cdots \times \alg{A}_n$ such that $f({\mathbf{c}'}^{a'}_2)>f({\mathbf{c}'}^{b'}_2)$.
Again, we conclude that for every $\mathbf{d}\in \alg{A}_1\times  \cdots \times \alg{A}_n$, we have 
$$f(\mathbf{d}^{a'}_2)>f(\mathbf{d}^{b'}_2).$$
Observe that, in particular, $f(\mathbf{d}^{a,a'}_{1,2})\geq f(\mathbf{d}^{b,a'}_{1,2}), f(\mathbf{d}^{a,b'}_{1,2}) \geq f(\mathbf{d}^{b,b'}_{1,2}) $.
As $\alg{B}$  is a chain, we must have 
$$
f(\mathbf{d}^{b,a'}_{1,2})\geq f(\mathbf{d}^{a,b'}_{1,2}) \quad \text{or}\quad f(\mathbf{d}^{b,a'}_{1,2})\leq f(\mathbf{d}^{a,b'}_{1,2}).
$$
Without loss of generality, suppose that $\geq $ holds.

 \begin{claim}\label{claim2}
 If $f $ is a median homomorphism, then $f(\mathbf{d}^{a,a'}_{1,2})= f(\mathbf{d}^{b,a'}_{1,2})$.
 Similarly, If $f $ is a median homomorphism, then $f(\mathbf{d}^{a,b'}_{1,2}) = f(\mathbf{d}^{b,b'}_{1,2}) $. 
 \end{claim}
\begin{proof}
We prove the first claim; the second follows analogously.
To see that we cannot have $>$, consider the tuples 
$$f(\mathbf{d}^{a,a'}_{1,2})> f(\mathbf{d}^{b,a'}_{1,2})\geq f(\mathbf{d}^{a,b'}_{1,2})$$
to conclude that $f$ cannot then be a median-homomorphism. 
\end{proof}

As Claim \ref{claim1} and  Claim \ref{claim2} contradict one another, the proof of Proposition~\ref{med-chain-homo} is now complete.
\end{proof}

The case when $\alg{B}$  is a $(2:3)$-semilattice (or, equivalently, when $\alg{B}$  is a tree), follows 
from Proposition~\ref{med-chain-homo} by observing that if $a$ and $b$  are distinct elements of $\alg{B}$, then $[a,b]$ can be thought of as a chain (see Remark~\ref{convexTreeChain}).
In this case, we can reason as in the proof of Proposition~\ref{med-chain-homo} to obtain a more general result, namely,
when $\alg{B}$ is not necessarily a chain but a median algebra whose median graph is a tree.

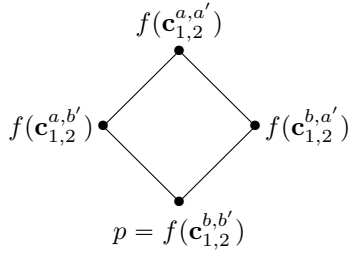
\begin{figure}\label{figure3}
\begin{tikzpicture}
\draw (0,0) node{$\bullet$}node[below]{$p=f(\mathbf{c}^{b,b'}_{1,2})$}--
(-1,1)node{$\bullet$}node[left]{$f(\mathbf{c}^{a,b'}_{1,2})$}--
(0,2)node{$\bullet$}node[above]{$f(\mathbf{c}^{a,a'}_{1,2})$} ;
\draw (0,0)--(1,1)node{$\bullet$}node[right]{$f(\mathbf{c}^{b,a'}_{1,2})$}--(0,2) ;
\end{tikzpicture}
 \caption{A square as in proof of Proposition~\ref{med-tree-homo}.}\label{fig:proof}
\end{figure}

\begin{prop}\label{med-tree-homo}
 Let $\alg{A}_1,  \cdots , \alg{A}_n$ be median algebras and let $\alg{B}$  be a tree.
 A mapping $ f:\alg{A}_1\times  \cdots \times \alg{A}_n\to \alg{B}$ is a median-homomorphism if and only if 
 there is an $i\in [n]$ and a median-homomorphism $g\colon  \alg{A}_i\to \alg{B}$ such that
 $$
 f(x_1,\ldots,x_n)=g(x_i),\qquad \text{for all}\quad (x_1,\ldots,x_n)\in {A}_1\times \cdots \times {A}_n.
 $$
 In other words, $f$ is an essentially unary median-homomorphism. 
\end{prop}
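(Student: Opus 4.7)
My plan is to adapt the argument of Proposition~\ref{med-chain-homo} by reducing the tree case to the chain case through intervals: by Remark~\ref{convexTreeChain} every interval in $\alg{B}$ is a chain, and by Proposition~\ref{distributive-Interv} it is a sub-median-algebra. Sufficiency is immediate, so I argue the contrapositive: assume that $f$ depends on two distinct arguments, say coordinates~$1$ and~$2$, and I will produce a four-element Boolean-square sub-median-algebra of the product whose $f$-image contradicts Proposition~\ref{med-chain-homo}.

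The first step is a tree-robust (in fact, codomain-agnostic) version of Claim~\ref{claim1}: if $f(\mathbf{c}^a_i) \neq f(\mathbf{c}^b_i)$ for some tuple $\mathbf{c}$, then $f(\mathbf{d}^a_i) \neq f(\mathbf{d}^b_i)$ for every tuple $\mathbf{d}$. Indeed, a coordinatewise computation gives $\me(\mathbf{c}^a_i, \mathbf{c}^b_i, \mathbf{d}^a_i) = \mathbf{c}^a_i$ and $\me(\mathbf{c}^a_i, \mathbf{c}^b_i, \mathbf{d}^b_i) = \mathbf{c}^b_i$; applying $f$ yields $f(\mathbf{c}^a_i) \in [f(\mathbf{c}^b_i), f(\mathbf{d}^a_i)]$ and $f(\mathbf{c}^b_i) \in [f(\mathbf{c}^a_i), f(\mathbf{d}^b_i)]$. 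If $f(\mathbf{d}^a_i)$ and $f(\mathbf{d}^b_i)$ coincided with some common value $p$, then $\me(f(\mathbf{c}^a_i), f(\mathbf{c}^b_i), p)$ would have to equal both $f(\mathbf{c}^a_i)$ and $f(\mathbf{c}^b_i)$, forcing $f(\mathbf{c}^a_i) = f(\mathbf{c}^b_i)$, a contradiction. Note this uses only that $\alg{B}$ is a median algebra.

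Applying this to the assumed dependencies on coordinates~$1$ and~$2$, there exist $a, b \in A_1$ and $a', b' \in A_2$ such that, for \emph{any} fixed tuple $\mathbf{c}$, the four values
$$X := f(\mathbf{c}^{a,a'}_{1,2}), \quad Y := f(\mathbf{c}^{a,b'}_{1,2}), \quad Z := f(\mathbf{c}^{b,a'}_{1,2}), \quad W := f(\mathbf{c}^{b,b'}_{1,2})$$
satisfy $X \neq Y$, $X \neq Z$, $W \neq Y$, and $W \neq Z$. The four product tuples form a sub-median-algebra of $A_1 \times \cdots \times A_n$ isomorphic to $\{0,1\}^2$, and identities such as $\me((a,a'),(a,b'),(b,b')) = (a,b')$ (computed coordinatewise) together with the homomorphism property give $Y = \me(X, Y, W)$ and $Z = \me(X, Z, W)$, i.e.\ $Y, Z \in [W, X]$.

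Since $\alg{B}$ is a tree, Remark~\ref{convexTreeChain} shows that $[W, X]$ is a chain (of at least two elements, since $W \neq X$: otherwise $Y \in [W, X] = \{W\}$ would force $W = Y$, contradicting $W \neq Y$). The restriction of $f$ to the four-element Boolean square therefore defines a median-homomorphism from $\{0,1\}^2$ into this chain, and Proposition~\ref{med-chain-homo} forces it to be essentially unary, whence $X = Y$ or $X = Z$---contradicting the previous paragraph. The main obstacle I anticipate is the first step: transferring ``dependency on coordinate~$i$'' from one tuple to any other without a total order to exploit; once that is in hand, the interval-chain property of trees reduces everything to the already-settled chain case.
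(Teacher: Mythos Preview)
Your proof is correct and coincides with the first of the two arguments the paper indicates: the paragraph immediately preceding Proposition~\ref{med-tree-homo} says precisely that the result ``follows from Proposition~\ref{med-chain-homo} by observing that if $a$ and $b$ are distinct elements of $\alg{B}$, then $[a,b]$ can be thought of as a chain (see Remark~\ref{convexTreeChain})'', and your write-up is a careful fleshing-out of that hint --- including the observation that the analogue of Claim~\ref{claim1} needs only the median structure of $\alg{B}$, and the verification that the four tuples $\mathbf{c}^{x,y}_{1,2}$ form a copy of $\{0,1\}^2$ whose $f$-image lands in the chain $[W,X]$, so that Proposition~\ref{med-chain-homo} applies directly.

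The paper then also writes out a second, self-contained argument (labelled ``an additional proof, alternative to that indicated in the paragraph preceding''): rather than invoking Proposition~\ref{med-chain-homo} on the restricted map, it observes that the four values $f(\mathbf{c}^{x,y}_{1,2})$ form a square in $\struc{B,\leq_p}$ (two incomparable elements with a common upper bound), which cannot exist in a tree. Your route trades that direct order-theoretic contradiction for a clean black-box appeal to the chain case; the paper's alternative avoids re-entering Proposition~\ref{med-chain-homo} but is somewhat sketchier about why the four values are genuinely in ``square'' position. Both reach the same conclusion with the same preliminary step (the codomain-agnostic Claim~\ref{claim1}).
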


\begin{proof} We provide an additional proof, alternative to that indicated in the paragraph preceding Proposition~\ref{med-tree-homo}.

As in the proof of Proposition \ref{med-chain-homo}, suppose that $f$ depends on at least 2 of its arguments. Without loss of generality, we may assume that they are the first two.
 Hence, there are $a,b\in \alg{A}_1$ and $\mathbf{c}\in \alg{A}_1\times  \cdots \times \alg{A}_n$ such that $f(\mathbf{c}^a_1)\neq f(\mathbf{c}^b_1)$.
 As in Claim \ref{claim1}, it is easy to see that the same holds for all $\mathbf{d}\in \alg{A}_1\times  \cdots \times \alg{A}_n$.
 
 Similarly, if the second argument of $f$ is essential, then are 
 $a',b'\in \alg{A}_2$ and $\mathbf{c'}\in \alg{A}_1\times  \cdots \times \alg{A}_n$ such that $f(\mathbf{c'}^{a'}_2)\neq f(\mathbf{c'}^{b'}_2)$.
 Once again, we have in fact that $f(\mathbf{d'}^{a'}_2)\neq f(\mathbf{d'}^{b'}_2)$, for all $\mathbf{d'}\in \alg{A}_1\times  \cdots \times \alg{A}_n$.

 Hence, we have that  $f(\mathbf{c}^{x,y}_{1,2})$ for $(x,y)\in \{a,b\}\times\{a',b'\}$, are pairwise distinct. By picking $p$ of the form $f(\mathbf{c}^{x,y}_{1,2})$, 
 it then follows that the four points $f(\mathbf{c}^{x,y}_{1,2})\in \alg{B}$ form a square as in Figure~\ref{figure3}. 
 This contradicts the fact that  $\alg{B}$ is a tree.
\end{proof}

By Proposition~\ref{med-tree-homo}  to describe median-homomorphisms of the form 
$$ f:\alg{A}_1\times  \cdots \times \alg{A}_n\to \alg{B}$$ in the case when  
$\alg{A}_1,  \cdots , \alg{A}_n$ are median algebras and $\alg{B}$ is a tree, it suffices to describe median homomorphisms $f\colon\alg{A}\to \alg{B}$ for a median algebra $\alg{A}$ and 
a tree $\alg{B}$. Such descriptions follow from Theorem~\ref{Hom-Median}.

In the case when both $\alg{A}$ and  $\alg{B}$ are trees, Theorem~\ref{Hom-Median} together with Corollary \ref{cor:new} and Theorem \ref{Median-Tree}, imply the following proposition. 

\begin{prop}\label{Hom-Median-Tree}
Suppose that the median algebras $\alg{A}$ and  $\alg{B}$ are trees, and consider 
a mapping $f \colon A\to B$.
 Then the following assertions are equivalent.
\begin{itemize}

 \item[$(i)$] $f$ is a median-homomorphism.
 
 \item[$(ii)$] For all $a,b\in A$, the induced mapping $f \colon \struc{[a,b], \leq_{a}}\to \struc{[f(a),f(b)], \leq_{f(a)}}$ is an isotone function between chains.
 

%
%
%
%
%
\end{itemize}
\end{prop}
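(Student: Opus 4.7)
The plan is to observe that the proposition is essentially a specialization of Theorem~\ref{Hom-Median} (in particular clause~(iv)) and Corollary~\ref{cor:new} to the tree setting, where Theorem~\ref{Median-Tree}(v) guarantees that every interval $[a,b]$ and $[f(a),f(b)]$ is in fact a chain with respect to $\leq_a$ (resp.\ $\leq_{f(a)}$). So the main work is just to check that ``isotone map between chains'' is the right translation of the general equivalence under the tree hypothesis.

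For the direction $(i)\Rightarrow(ii)$, I would start from the assumption that $f$ is a median-homomorphism. By Theorem~\ref{Hom-Median}(iv) we get $f([a,b])\subseteq [f(a),f(b)]$, so the restriction $f\colon [a,b]\to [f(a),f(b)]$ makes sense. Corollary~\ref{cor:new} then upgrades this restriction to a lattice homomorphism $\langle [a,b],\wedge_a,\wedge_b\rangle \to \langle [f(a),f(b)],\wedge_{f(a)},\wedge_{f(b)}\rangle$, which in particular is order-preserving with respect to $\leq_a$ and $\leq_{f(a)}$. Since $\alg{A}$ and $\alg{B}$ are trees, Theorem~\ref{Median-Tree}(v) tells us that both $[a,b]$ and $[f(a),f(b)]$ are chains, so the induced map is an isotone map between chains, as required by~(ii).

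For the converse $(ii)\Rightarrow(i)$, I would note that the very statement of~(ii) already contains the inclusion $f([a,b])\subseteq [f(a),f(b)]$ for every pair $a,b\in A$ (otherwise the ``induced mapping'' into $[f(a),f(b)]$ would be ill-defined). But this inclusion is precisely condition~(iv) of Theorem~\ref{Hom-Median}, which by that theorem is equivalent to $f$ being a median-homomorphism. The isotonicity of $f$ on each interval is then automatic and does not need to be invoked for this direction.

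I do not foresee a genuine obstacle: the tree hypothesis collapses the general order-homomorphism condition of Theorem~\ref{Hom-Median}(iii) to an isotone-between-chains condition because every interval is a chain. The only mild care needed is to make sure the reader sees that the well-definedness of the induced map in~(ii) is already the nontrivial half of Theorem~\ref{Hom-Median}(iv), so no further work is needed in the $(ii)\Rightarrow(i)$ direction.
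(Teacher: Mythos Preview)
Your proposal is correct and follows precisely the route the paper indicates: the proposition is stated as an immediate consequence of Theorem~\ref{Hom-Median}, Corollary~\ref{cor:new}, and Theorem~\ref{Median-Tree}, and you have spelled out exactly how those three results combine. Your additional observation that the isotonicity clause in~(ii) is redundant for the direction $(ii)\Rightarrow(i)$ (since the mere well-definedness of the induced map already gives condition~(iv) of Theorem~\ref{Hom-Median}) is a nice point that the paper does not make explicit.
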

%
%
%
%
%
%

\begin{figure}
 

\begin{center}
\subfigure[$\alg{A}_1$]{\label{fig:bcv}\begin{tikzpicture}[scale=\tkzscl]
\draw (0,0) node{$\bullet$}node[below]{$a$}--(0,1)node{$\bullet$}node[right]{$b$};
\draw (0,1)--(-1,2)node{$\bullet$}node[left]{$c$};
\draw (0,1)--(1,2)node{$\bullet$}node[right]{$d$};
\end{tikzpicture}}
\hspace{2cm}
\subfigure[$\alg{A}_2$]{\label{fig:lat03_01} \begin{tikzpicture}[scale=\tkzscl]
\draw (0,2)--(0,2.75) node{$\bullet$}node[right]{$d'$};
\draw (0,0) node{$\bullet$} node[below]{$a$}--(-1,1)node{$\bullet$} node[left]{$b$};
\draw (0,0)--(1,1) node{$\bullet$} node[right]{$c$};
\draw (1,1)--(0,2) node{$\bullet$} node[right]{$d$};
\draw (-1,1)--(0,2);
\end{tikzpicture}}
 \end{center}
 

 \caption{Examples of $\wedge$-semilattices.}\label{fig:lat}
\end{figure}
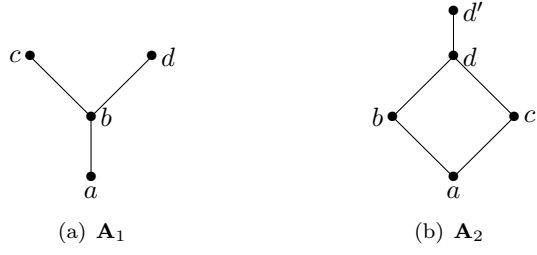

%

In the case when $\alg{A}$ is  a tree and $\alg{B}$ is an arbitrary median algebra, Theorem~\ref{Hom-Median} can be strengthened by an additional and equivalent assertion, namely, 
the existential variant of its assertion $(ii)$. 

\begin{thm}\label{prop:added}
Let $\alg{A}$ be a tree and $\alg{B}$ an arbitrary median algebra. Then we have that one (or, equivalently, all) assertions in  Theorem~\ref{Hom-Median} are equivalent to
\begin{itemize}
\item[$(ii')$] There is a $p\in A$ such that $f \colon \struc{A, \wedge_{p}}\to \struc{B, \wedge_{f(p)}}$ is a semilattice-homomorphism.
\end{itemize}
\end{thm}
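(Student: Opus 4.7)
My plan is as follows. The direction $(ii)\Rightarrow(ii')$ is a trivial specialization of the universal quantifier (using that $A\neq\emptyset$). The content of the statement is the converse $(ii')\Rightarrow(i)$, i.e., that $f$ is already a full median-homomorphism once the semilattice-homomorphism identity holds at a single base point.

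Fix $p\in A$ with $f(x\wedge_p y)=f(x)\wedge_{f(p)} f(y)$ for all $x,y\in A$. The first step is to use the tree hypothesis on $\alg{A}$ to control $\me(x,y,z)$: by Theorem~\ref{Median-Tree}, $\alg{A}$ is a $(2:3)$-median semilattice for $\leq_p$, and in fact (since $\leq_p$ is a tree order) the three pairwise meets $x\wedge_p y,\ y\wedge_p z,\ z\wedge_p x$ are pairwise $\leq_p$-comparable for any $x,y,z\in A$. Hence one of them is their $\leq_p$-maximum and equals $\me(x,y,z)$; by the symmetry of what must be proved, I may assume $\me(x,y,z)=x\wedge_p y$, so that $y\wedge_p z,\ z\wedge_p x \leq_p x\wedge_p y$.

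The second step is to push these comparisons through $f$. The inequality $y\wedge_p z\leq_p x\wedge_p y$ rewrites as $(y\wedge_p z)\wedge_p(x\wedge_p y)=y\wedge_p z$, and applying $f$ together with two instances of the identity in $(ii')$ yields $f(y\wedge_p z)\leq_{f(p)} f(x\wedge_p y)$; analogously $f(z\wedge_p x)\leq_{f(p)} f(x\wedge_p y)$. Computing $\me(f(x),f(y),f(z))$ in $\alg{B}$ via~\eqref{eqn:intro} based at $f(p)$, its three pairwise meets are $f(x\wedge_p y), f(y\wedge_p z), f(z\wedge_p x)$ (by $(ii')$), all dominated by $f(x\wedge_p y)$; the required join therefore exists and equals $f(x\wedge_p y)=f(\me(x,y,z))$.

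The only mildly delicate point I anticipate is the existence of the supremum defining the median in $\alg{B}$: since $\alg{B}$ is an arbitrary median algebra (not necessarily a chain or a tree), the join is not automatic, but it is guaranteed by the median semilattice axiom recalled in Section~\ref{sec:intro} together with the uniform upper bound $f(x\wedge_p y)$ produced above.
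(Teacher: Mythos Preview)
Your proof is correct and follows essentially the same route as the paper's: reduce to showing $(ii')\Rightarrow(i)$, use the tree hypothesis (via Theorem~\ref{Median-Tree}) to identify $\me(x,y,z)$ with the $\leq_p$-largest of the three pairwise meets, push the inequalities through $f$ using the semilattice-homomorphism identity, and conclude that the join defining $\me(f(x),f(y),f(z))$ in $\alg{B}$ collapses to $f(x)\wedge_{f(p)}f(y)=f(\me(x,y,z))$. Your explicit remark that the required join in $\alg{B}$ exists because the three meets have the common upper bound $f(x\wedge_p y)$ is a nice point that the paper leaves implicit.
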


\begin{proof}
 We show that $(ii')\implies (i)$. In this way we have 
 $$(ii)\implies (ii')\implies (i)\implies (ii)$$ and thus they are all equivalent.  
 So suppose that $\struc{A,  \wedge_{p}}$ is a tree and that $$f \colon \struc{A, \wedge_p}\to \struc{B, \wedge_{f(p)}}$$ is a semilattice-homomorphism.
 In particular, for every $a,b,c\in A$, we have that
$$\me(a,b,c)\in \{a\wedge_p b,b\wedge_p c,c\wedge_p a\},$$
and that $f \colon \struc{A, \leq_{p}}\to \struc{B,  \leq_{f(p)}}$ is an order-homomorphism. 

Without loss of generality, assume that $\me(a,b,c)=a\wedge_p b$. Hence,
$$a\wedge_p b=\me(a,b,c)=(a\wedge_p b)\vee_p(b\wedge_p c)\vee_p(c\wedge_p a)\geq_p b\wedge_p c,c\wedge_p a,$$
and, from the assumption that $f$ is  a semilattice-homomorphism, 
\begin{eqnarray*}
f(\me(a,b,c))=f(a\wedge_p b)=f(a)\wedge_{f(p)}f( b)\geq_{f(p)} f(b)\wedge_{f(p)}f( c),f(c)\wedge_{f(p)}f( a). 
\end{eqnarray*}
Therefore, 
\begin{eqnarray*}
&\me&(f(a),f(b),f(c))\\
&=&(f(a)\wedge_{f(p)}f( b))\vee_{f(p)} (f(b)\wedge_{f(p)}f( c))\vee_{f(p)}(f(c)\wedge_{f(p)}f( a)) \\
&=& f(a)\wedge_{f(p)}f( b)=  f(\me(a,b,c)) 
\end{eqnarray*}
and the proof of $(ii')\implies (i)$, and thus of the theorem, is now complete.
\end{proof}

\begin{rem}
 Note that the existential variant of $(iii)$ in Theorem~\ref{Hom-Median} is not equivalent to $(i)$ even in the case when $\alg{A}$ is a tree. 
 To see this, consider the function on $\alg{A}_1$ (see Figure~\ref{fig:bcv})
 that maps $a,b$ to $a$, and leaves $c$ and $d$ fixed. Then it is an order-homomorphism for $p=a$, but it is not a median-homorphism.
\end{rem}
%

\begin{rem}
 Note also that Theorem~\ref{prop:added} does not necessarily hold in the case when  $\alg{A}$ is not a tree, even if $\alg{B}$ is conservative.
For instance, let $\alg{A}$ be the median algebra given in Figure~\ref{fig:lat03_01} and  
 $\alg{B}$ the 4-element chain $1,2,3,4$, and consider the mapping 
 that sends $d,d'$ to $4$, $b,c$ to $2$, and $a$ to $1$. Then, for all $p\in A$, 
 $f \colon \struc{A, \wedge_{p}}\to \struc{B, \wedge_{f(p)}}$ is an order-homomorphism, but it is not a semilattice-homorphism for $p=a$ 
 (although it is a semilattice-homorphism for $p=d'$).
\end{rem}
 
\begin{problem} Given an arbitrary median algebra $\alg{B}$, describe those median algebras  $\alg{A}$ for which  Theorem~\ref{prop:added} holds.
\end{problem}



\section{Concluding remarks}\label{final}
 
In this paper we proposed 
a natural relaxation 
of conservativeness as considered in \cite{CMT}, which is of quite different flavour than that proposed in \cite{Bandelt1983,Sholander1954},
and showed that it axiomatizes the class of median algebras that are trees.
Moreover, we considered mappings 
$$ f:\alg{A}_1\times  \cdots \times \alg{A}_n\to \alg{B}_1\times\cdots \times \alg{B}_k$$
from products of arbitrary median algebras to products of $(2:3)$-semilattices (or, equivalently, trees)
and showed that those that preserve the median structure of  $\alg{A}_1\times  \cdots \times \alg{A}_n$ can be decomposed into 
median-homomorphisms $g\colon \alg{A}_i\to \alg{B}_j$. The latter are described in Theorem~\ref{Hom-Median} and, 
in the particular case when $\alg{A}_i$ is a tree, also in Proposition~\ref{Hom-Median-Tree} and in Theorem~\ref{prop:added}.

In the way, we looked into mappings 
$$ f:\alg{A}_1\times  \cdots \times \alg{A}_n\to \alg{B},$$ where $\alg{A}_1,  \cdots , \alg{A}_n$ are arbitrary but where $\alg{B}$ is a $(2:3)$-semilattice, that is,  $\alg{B}$ is a tree. 
The description of such median-homomorphims was then given in Proposition~\ref{med-tree-homo},  from which it follows that they are essentially unary.

Looking at them as aggregation proceedures that are not dictatorial (i.e., that depend on at least two arguments), 
this translates into an impossibility theorem variant to that of Arrow \cite{Arrow50}.

Now the natural question is  to determine whether Proposition~\ref{med-tree-homo} still holds for an arbitrary median algebra  $\alg{B}$. 
More precisely: 

\begin{problem}\label{prob:last}
Given arbitrary median algebras  $\alg{A}_1,  \cdots , \alg{A}_n$, 
 describe those median algebras  $\alg{B}$ for which all median-homomorphisms $ f:\alg{A}_1\times  \cdots \times \alg{A}_n\to \alg{B}$ are trivial, 
 i.e., essentially unary.
\end{problem}

Towards a solution to this problem, consider the following example.
Suppose that  $\alg{B}$ is an arbitrary median algebra thought as a $\wedge$-semilattice that is not a tree.
Thus, there are $a,b\in B$ with a nontrivial upper bound $c$, i.e., $c>a,b>a\wedge b$. Set $p=a\vee b$, $q=a\wedge b$. Note that, together with $a$ and $ b$, they form a square. 
Now,  consider the median algebras $\alg{A}_1= \alg{A}_2=\{0,1\}$. Then $ f:\alg{A}_1\times  \alg{A}_2\to \alg{B}$ given by
$$f(1,1)=p,\quad f(0,1)=a,\quad f(1,0)=b,\quad f(0,0)=q$$
is a median homomorphism that depends on both of its variables. In other words, this is a counter-example to Problem~\ref{prob:last}.

Now this example can be easily extended to mappings $ f:\alg{A}_1\times  \cdots \times \alg{A}_n\to \alg{B}$ and, as we have seen, 
the condition that $\alg{B}$ is a  $(2:3)$-semilattice (i.e., a tree or, equivalently, does not contain a square as an order substructure) forces such 
median-preserving mappings to be essentially unary.
From these considerations, we can thus provide an answer to  Problem~\ref{prob:last}, namely:
\begin{cor}
All median-homomorphisms $ f:\alg{A}_1\times  \cdots \times \alg{A}_n\to \alg{B}$ are essentially unary  if and only if $\alg{B}$ is a tree.
\end{cor}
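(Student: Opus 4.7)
The plan is to dispatch the backward direction ($\alg{B}$ a tree $\Rightarrow$ all such homomorphisms are essentially unary) by direct appeal to Proposition~\ref{med-tree-homo}. For the converse I argue by contrapositive: assuming $\alg{B}$ is not a tree, and tacitly assuming that at least two of the factors $\alg{A}_i$ are nontrivial (the corollary being automatic otherwise), I will construct an explicit median-homomorphism $f\colon \alg{A}_1\times\cdots\times\alg{A}_n\to \alg{B}$ that depends on two distinct arguments, contradicting essential unariness.

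The first step reproduces the example immediately preceding the corollary. Since $\alg{B}$ is not a tree, there exist incomparable $a,b\in B$ with a common upper bound $c$; inside the distributive lattice $\downarrow c$ given by Proposition~\ref{distributive-Interv}, one obtains $p=a\vee b$ and $q=a\wedge b$, so that $\{q,a,b,p\}$ is a $4$-element Boolean sublattice of $\alg{B}$. By formula~\eqref{eqn:intro}, the bijection
\[
g\colon \{0,1\}^2\to\{q,a,b,p\},\qquad (0,0)\mapsto q,\ (0,1)\mapsto a,\ (1,0)\mapsto b,\ (1,1)\mapsto p,
\]
is then a median-homomorphism, since it is a lattice isomorphism onto a distributive sublattice of $\alg{B}$ and on any distributive lattice the median operation coincides with the ternary lattice polynomial.

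The second step extends the example from two-element factors to arbitrary nontrivial factors. Pick distinct indices $i_1,i_2\in[n]$ with $|A_{i_1}|,|A_{i_2}|\geq 2$. The crucial auxiliary input is that any nontrivial median algebra admits a surjective median-homomorphism onto $\{0,1\}$---this is the classical separation (or subdirect representation) theorem for median algebras, which asserts that the two-element median algebra is the unique subdirectly irreducible one. Granting this, pick surjective $h_j\colon \alg{A}_{i_j}\to \{0,1\}$ for $j=1,2$ and define
\[
f(x_1,\ldots,x_n)=g\bigl(h_1(x_{i_1}),\,h_2(x_{i_2})\bigr).
\]
Then $f$ is a median-homomorphism as a composition of median-homomorphisms, and toggling either coordinate $i_{1}$ or $i_{2}$ between preimages of $0$ and $1$ under the corresponding $h_j$ alters the output, because the four values so produced are the pairwise distinct elements $q,a,b,p$. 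Thus $f$ depends on both arguments $x_{i_1}$ and $x_{i_2}$, which completes the contrapositive. The only delicate point is the separation theorem used to obtain the $h_j$; once that standard fact is accepted, the remainder is the immediate verification that $g$ preserves medians together with the closure of median-homomorphisms under composition.
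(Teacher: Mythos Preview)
Your proof is correct, and the backward direction matches the paper exactly (direct appeal to Proposition~\ref{med-tree-homo}). For the forward direction the paper, like you, exhibits the Boolean square $\{q,a,b,p\}$ inside $\alg{B}$ and the resulting median embedding of $\{0,1\}^2$; the difference is that the paper simply takes $\alg{A}_1=\alg{A}_2=\{0,1\}$ as the domain factors---reading the corollary with implicit universal quantification over the $\alg{A}_i$---and then asserts without detail that the example ``can be easily extended'' to general products. You instead supply that extension explicitly by invoking the subdirect-representation theorem to produce surjections $h_j\colon\alg{A}_{i_j}\to\{0,1\}$ and precomposing. Your version therefore actually proves the sharper fixed-domain statement (for any \emph{given} product with at least two nontrivial factors, all median-homomorphisms into $\alg{B}$ are essentially unary iff $\alg{B}$ is a tree), at the cost of importing a classical structural fact about median algebras that the paper itself does not state or use.
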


%

\end{document}